\newtheorem{proposition}{Proposition}
\newtheorem{lemma}{Lemma}
\newproof{proof}{Proof}
\newdefinition{definition}{Definition}
\newtheorem{theorem}{Theorem}
  \newtheorem{corollary}{Corollary}
\journal{arXiv}
\newcommand{\boxi}{\operatorname{box}}
\newcommand{\cub}{\operatorname{cub}}
\newcommand{\dime}{\operatorname{dim}}
\newcommand{\idime}{\operatorname{idim}}
\newcommand{\ch}{\operatorname{ch}}
\begin{document}
\begin{frontmatter}
\title{Sublinear Approximation Algorithms for Boxicity and Related Problems\footnote{A preliminary version of this work appeared in IPEC 2012.}}
%\tnotetext[t1]{Some of the results appeared in IPEC 2012.}
%\titlerunning{Algorithms for Boxicity}  % abbreviated title (for running head)
\author[abh]{Abhijin Adiga}
\ead{abhijin@vbi.vt.edu}
\author[jas]{Jasine Babu}
\ead{jasinekb@gmail.com}
\author[sun]{L. Sunil Chandran}
\ead{sunil@csa.iisc.ernet.in}
\address[abh]{Virginia Tech., USA.}
\address[jas]{University of Haifa, Israel.}
\address[sun]{Indian Institute of Science, Bangalore, India.}
%\authorrunning{Adiga et.al} % abbreviated author list (for running head)
%\date{Received: date / Accepted: date}
%\maketitle              % typeset the title of the contribution
\begin{abstract}
Boxicity of a graph $G(V,$ $E)$ is the minimum  integer $k$ such that $G$ can be represented as the intersection 
graph of axis parallel boxes in $\mathbb{R}^k$. 
Cubicity is a variant of boxicity, where the axis parallel boxes in the intersection representation are restricted to be of unit length sides.
Deciding whether boxicity (resp. cubicity) of a graph is at most $k$ is NP-hard, even for $k=2$ or $3$. 
Computing these parameters is inapproximable within $O(n^{1 - \epsilon})$-factor, for any $\epsilon >0$ in polynomial time 
unless $\text{NP}=\text{ZPP}$, even for many simple graph classes. 

In this paper, we give a polynomial time $\kappa(n)$ factor approximation algorithm for computing boxicity and a $\kappa(n)\lceil \log \log n\rceil$ factor 
approximation algorithm for computing the cubicity, where $\kappa(n) =2\left\lceil\frac{n\sqrt{\log \log n}}{\sqrt{\log n}}\right\rceil$.
These $o(n)$ factor approximation algorithms also produce the corresponding box (resp. cube) representations. 
As a special case, this resolves the question posed by Spinrad \cite{Spinrad2003} about polynomial time construction of $o(n)$ 
dimensional box representations for boxicity $2$ graphs.
Other consequences of our approximation algorithm include $O(\kappa(n))$ factor approximation algorithms for computing the 
following parameters: the partial order dimension (poset dimension) of finite posets, the interval dimension of finite posets, minimum chain cover of bipartite graphs,
Ferrers dimension of digraphs and threshold dimension of split graphs and co-bipartite graphs. 
Each of these parameters is inapproximable within an $O(n^{1 - \epsilon})$-factor, for any $\epsilon >0$ in polynomial time 
unless $\text{NP}=\text{ZPP}$ and the algorithms we derive seem to be the first $o(n)$ factor approximation algorithms known 
for all these problems. We note that obtaining a $o(n)$ factor approximation for poset dimension was also mentioned as an open problem by 
Felsner et al.~\cite{Felsner2015}.
\end{abstract}
\begin{keyword}
Boxicity \sep Approximation algorithm \sep Partial order dimension \sep Threshold dimension
\end{keyword}
\end{frontmatter}
\section{Introduction}\label{sectIntro}
Let $G(V$, $E)$ be a graph. If  $I_1$, $I_2$, $\cdots$, $I_k$ are (unit) interval graphs on the vertex set $V$ such that 
$E(G)=E(I_1) \cap E(I_2) \cap \cdots \cap E(I_k)$, then $\{I_1$, $I_2$, $\cdots$, $I_k\}$ is called a box (cube) representation of 
$G$ of dimension $k$. Boxicity (cubicity) of a non-complete graph $G$, denoted by $\boxi(G)$ (respectively $\cub(G)$), is defined as the minimum 
integer $k$ such that $G$ has a box (cube) representation of dimension $k$. For a complete graph, it is defined to be zero. 
Equivalently, boxicity (cubicity) is the minimum integer $k$ such that $G$ can be represented as the intersection graph of axis 
parallel boxes (cubes) in $\mathbb{R}^k$. Boxicity was introduced by Roberts \cite{Rob1} in 1969 for modeling problems in social 
sciences and ecology. Some well known NP-hard problems like the max-clique 
problem are polynomial time solvable, if low dimensional box representations are known \cite{Rosgen}.

For any graph $G$ on $n$ vertices, $\boxi(G) \le \left \lfloor \frac{n}{2}\right \rfloor$ 
and $\cub(G) \le \left \lfloor \frac{2n}{3}\right \rfloor$. 
Upper bounds of boxicity in terms of parameters like maximum degree \cite{AdigaCOCOON} and 
tree-width \cite{Chandran2007} are known.
It was shown by Scheinerman \cite{Sch1} in 1984 that the boxicity of 
outer planar graphs is at most two. In 1986, Thomassen \cite{Thom1} proved that the boxicity of planar graphs is at most 3. 

Computation of boxicity is a notoriously hard problem. 
Even for $k=2$ or $3$, deciding whether boxicity (resp. cubicity) of a graph is at most $k$ 
is NP-complete \cite{Yan1,Krat1,Breu1998}.
Recently, Chalermsook et al. \cite{Chalermsook2013} proved that 
no polynomial time algorithm for approximating boxicity of bipartite graphs with approximation factor within $O(n^{1 - \epsilon})$ 
for any $\epsilon > 0$ is possible unless $\text{NP}=\text{ZPP}$. 
Same non-approximability holds in the case of split graphs and co-bipartite graphs too. 
Since cubicity and boxicity are equal for co-bipartite graphs, 
these hardness results extend to cubicity as well.    

Boxicity is also closely related to other dimensional parameters like poset dimension, interval dimension, threshold 
dimension, minimum chain cover number of bipartite graphs, Ferrers dimension of digraphs etc. \cite{Mahadev1995,Yan1}. 
These parameters also have $O(n^{1 - \epsilon})$ approximation hardness results 
for $\epsilon > 0$, assuming $\text{NP} \ne \text{ZPP}$. Further, 
unless $\text{NP} \subseteq \text{ZPTIME}(n^{\text{poly} \log n})$, for any $\gamma >0$ there is no $\frac{n}{2^{(\log n)^{{3}/{4}+\gamma}}}$
factor approximation algorithm for any of these problems including boxicity and cubicity \cite{Chalermsook2013} (for more 
details, see Section \ref{subsectionConseq}). 
\subsection*{Main results}
\begin{enumerate}
\item If $G$ is a graph on $n$ vertices, containing a clique of size $n-k$ or more, 
then $\boxi(G)$ and an optimal box representation of $G$ can be computed in time $n^2 2^{{O(k^2 \log k)}}$. 
\item Using the above result, we derive a polynomial time $2\left\lceil\frac{n\sqrt{\log \log n}}{\sqrt{\log n}}\right\rceil$ factor approximation algorithm
for computing boxicity and a $2\left\lceil\frac{n {(\log \log n)^{\frac{3}{2}}}}{\sqrt{\log n}}\right\rceil$ factor 
approximation algorithm for computing the cubicity. 
To our knowledge, no approximation algorithms for 
approximating boxicity and cubicity of general graphs within $o(n)$ factor were known till now.
\item The above algorithms also give us the corresponding box (resp. cube) representations. 
As a special case, this answers the question posed by Spinrad \cite{Spinrad2003} about polynomial time construction of $o(n)$ 
dimensional box representations for boxicity $2$ graphs in the affirmative.
\item As a consequence of our $o(n)$ factor approximation algorithm for boxicity, we derive
polynomial time $o(n)$ factor approximation algorithms for computing several related parameters: poset dimension,
interval dimension of finite posets, minimum chain cover of bipartite graphs,
Ferrers dimension of digraphs, and threshold dimension of split graphs and co-bipartite graphs. 
These algorithms seem to be the first $o(n)$ factor approximation algorithms known 
for each of these problems.
We note that obtaining an $o(n)$ factor approximation algorithm for poset dimension, 
is described as an open problem in Felsner et al. \cite{Felsner2015}. 
\end{enumerate}
\section{Prerequisites} \label{prereq} 
In this section, we give some basic facts necessary for the later part of this paper. 
For a vertex $v \in V$ of a graph $G$, we use $N_G(v)$ to denote the set of neighbors of $v$ in $G$. 
We use $G[S]$ to denote the induced subgraph of $G(V, E)$ on the vertex set $S \subseteq V$.  
If $I$ is an interval representation of an interval graph $G(V, E)$, we use $l_v(I)$ and $r_v(I)$ respectively to denote the 
left and right end points of the interval corresponding to $v\in V$ in $I$. 
The interval corresponding to $v$ is denoted as $\bigl[l_v(I), r_v(I)\bigr]$.
%It is easy to observe that any interval representation of an interval graph with distinct end points induces an ordering of the $2|V|$ end points. 
%\newtheorem{lem}[thm]{Lemma}
\begin{lemma}[Roberts \cite{Rob1}] \label{lmrob}
Let $G(V,$ $E)$ be any graph. For any $x\in V$, $\operatorname{box}(G) \le 1 + \operatorname{box}(G \setminus \{x\})$.
\end{lemma}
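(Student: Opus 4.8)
The plan is to exhibit a box representation of $G$ that uses exactly one more interval graph than an optimal representation of $G \setminus \{x\}$. Write $b = \boxi(G \setminus \{x\})$ and fix an optimal box representation $\{I_1, \ldots, I_b\}$ of $G \setminus \{x\}$ on the vertex set $V \setminus \{x\}$, so that $E(G \setminus \{x\}) = \bigcap_{j=1}^{b} E(I_j)$. My first step is to lift each $I_j$ to an interval graph $I_j'$ on the full vertex set $V$ by keeping all existing intervals unchanged and assigning to $x$ an interval long enough to contain every other interval of $I_j$. This makes $x$ adjacent to all of $V \setminus \{x\}$ in each $I_j'$, while leaving the adjacencies among $V \setminus \{x\}$ untouched.

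The crux is to build one additional interval graph $I_{b+1}$ that (i) makes $V \setminus \{x\}$ a clique, so that it does not disturb the adjacencies already correctly encoded among the non-$x$ vertices, and (ii) connects $x$ to exactly the vertices of $N_G(x)$. I would do this with explicit intervals: give each non-neighbour of $x$ the interval $[0,1]$, each vertex of $N_G(x)$ the interval $[0,2]$, and $x$ itself the interval $[\tfrac{3}{2},3]$. Then every vertex of $V \setminus \{x\}$ contains the point $0$, so they pairwise intersect and form a clique; and $x$'s interval meets $[0,2]$ but is disjoint from $[0,1]$, so $x$ is adjacent precisely to $N_G(x)$. This is manifestly a valid interval representation, so $I_{b+1}$ is an interval graph.

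It then remains to check that $\{I_1', \ldots, I_b', I_{b+1}\}$ is a box representation of $G$, i.e. $E(G) = \bigcap_{j=1}^{b} E(I_j') \cap E(I_{b+1})$. For a pair $u, v \in V \setminus \{x\}$, the edge $uv$ survives unchanged: it lies in every $E(I_j')$ exactly when it lies in $E(I_j)$, and it always lies in $E(I_{b+1})$ since $V \setminus \{x\}$ is a clique there; hence such a pair is an edge of the intersection iff it is an edge of $G \setminus \{x\}$, which matches $G$. For a pair $x, v$, the edge is present in every $E(I_j')$ by construction, so membership in the intersection is decided solely by $I_{b+1}$, where $x$ is adjacent to $v$ iff $v \in N_G(x)$; this again matches $G$. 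Thus the intersection is exactly $G$, yielding a representation in $b+1 = 1 + \boxi(G \setminus \{x\})$ dimensions and the desired inequality.

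The only real subtlety I would flag is the degenerate case $b = 0$, when $G \setminus \{x\}$ is complete and the base representation is empty. There $I_{b+1}$ alone must equal $G$; the construction above still applies verbatim and shows that a clique with one extra vertex joined to an arbitrary subset is an interval graph, so $\boxi(G) \le 1$, consistent with the claimed bound. Everything else is a routine verification of edges versus non-edges, so I expect no genuine obstacle beyond writing the intervals down carefully.
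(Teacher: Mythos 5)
Your proof is correct and is the standard argument for Roberts' lemma: lift each interval graph of an optimal representation of $G \setminus \{x\}$ by making $x$ universal, then add one extra interval graph in which $V \setminus \{x\}$ is a clique and $x$ meets exactly $N_G(x)$. The paper itself states this lemma with a citation to Roberts and gives no proof, so there is nothing to compare against; your construction, including the explicit intervals and the $b=0$ degenerate case, checks out.
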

\begin{lemma}\label{lm4}
  Let $G(V, E)$ be a graph on $n$ vertices and let $A \subseteq V$. Let $G_1(V, E_1)$ be a supergraph of $G$ with 
$E_1 = E \cup \{(x,y)\mid x,y \in A,$ $x \ne y\}$. If a box representation $\mathcal{B}$ of $G$ is known, 
then in $O(n |\mathcal{B}|)$ time we can construct a box representation $\mathcal{B}_1$ of $G_1$ of dimension $2 |\mathcal{B}|$. 
In particular, $\operatorname{box}(G_1) \le 2 \operatorname{box}(G)$.
\end{lemma}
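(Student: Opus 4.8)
\noindent
The plan is to reduce the problem to a single coordinate and then solve it there at the cost of two new coordinates. Write $\mathcal{B}=\{I_1,\ldots,I_k\}$ with $k=|\mathcal{B}|$, so that $E(G)=\bigcap_{j} E(I_j)$. For each interval graph $I_j$ I would construct a pair of interval graphs $I_j'$ and $I_j''$ enjoying the following two properties: (i) every pair $x,y\in A$ is adjacent in both $I_j'$ and $I_j''$; and (ii) for every pair $u,v$ that are \emph{not} both in $A$, $u$ and $v$ are adjacent in both $I_j'$ and $I_j''$ if and only if they are adjacent in $I_j$. Setting $\mathcal{B}_1=\{I_1',I_1'',\ldots,I_k',I_k''\}$ then yields a representation of dimension $2k$, and a short case check shows it represents exactly $G_1$: a pair inside $A$ is adjacent in all $2k$ graphs by (i), matching $E_1$; and a pair not both in $A$ is adjacent in the intersection if and only if it is adjacent in every $I_j$ by (ii), i.e.\ if and only if it is an edge of $G$, which again matches $E_1$ for such pairs.

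\noindent
For the single-coordinate construction I would work directly with the interval representation of $I_j$ and stretch the intervals of the vertices of $A$ to one side. Concretely, let $M$ be larger than every right endpoint and $m$ smaller than every left endpoint occurring in $I_j$. In $I_j'$, keep the interval $[l_v(I_j),r_v(I_j)]$ for each $v\notin A$ and replace the interval of each $v\in A$ by $[l_v(I_j),M]$; in $I_j''$, again keep the intervals of the vertices outside $A$, but replace the interval of each $v\in A$ by $[m,r_v(I_j)]$. Intuitively, $I_j'$ pushes the $A$-intervals to the right and $I_j''$ pushes them to the left, so that the $A$-intervals become pairwise overlapping in each of the two coordinates.

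\noindent
To verify (i) and (ii) I would check the three types of pairs. For $x,y\in A$: in $I_j'$ both intervals extend to the common right endpoint $M$ and hence overlap, and in $I_j''$ both contain the region near $m$ and hence overlap, so (i) holds. For $u,v\notin A$: the intervals are unchanged in both $I_j'$ and $I_j''$, so adjacency in each is exactly that in $I_j$. For the mixed case $u\in A$, $v\notin A$: in $I_j'$ the stretched interval $[l_u(I_j),M]$ meets $[l_v(I_j),r_v(I_j)]$ iff $l_u(I_j)\le r_v(I_j)$, while in $I_j''$ the interval $[m,r_u(I_j)]$ meets $v$'s interval iff $l_v(I_j)\le r_u(I_j)$; the conjunction of these two conditions is precisely the overlap test $l_u(I_j)\le r_v(I_j)$ and $l_v(I_j)\le r_u(I_j)$ characterizing adjacency in $I_j$. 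This establishes (ii).

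\noindent
The running time is then immediate: for each of the $k$ interval graphs I scan its $n$ vertices once, copying the intervals and overwriting the endpoints of the vertices in $A$, for a total of $O(nk)=O(n|\mathcal{B}|)$; and applying the construction to an optimal representation of $G$ gives $\boxi(G_1)\le 2\,\boxi(G)$. I expect the only genuinely delicate point to be item (ii) for the mixed pairs $u\in A$, $v\notin A$: extending the $A$-intervals on a single side would, in one coordinate alone, manufacture spurious adjacencies, and the whole purpose of splitting into a right-extension $I_j'$ and a left-extension $I_j''$ is that their intersection reinstates both halves of the overlap test and so recovers the original adjacency exactly. Everything else is routine bookkeeping.
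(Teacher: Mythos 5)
Your construction is exactly the paper's: for each interval graph you form one copy extending the $A$-intervals to a common right endpoint and one extending them to a common left endpoint, and observe that the conjunction of the two one-sided overlap tests recovers the original adjacency for mixed pairs. The proof is correct and matches the paper's argument in all essentials.
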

\begin{proof}
Let $\mathcal{B}$ $=\{I_1$, $I_2$, $\ldots$, $I_b\}$ be a box representation of $G$. 
For each $1 \le i \le b$, let $l_i = \displaystyle\min_{u\in V}$ $l_u(I_i)$ and $r_i = \displaystyle\max_{u\in V}$ $r_u(I_i)$. 
For $1\le i \le b$, let $I_{i_1}$ be the interval graph obtained from $I_i$ by assigning the intervals 
$$
\left[ l_v(I_{i_1}),r_v(I_{i_1})\right] = 
\begin{cases}  \left[l_i, r_{v}(I_{i})\right] & \text{if $v\in A$,}
\\
\left[ l_v(I_{i}), r_v(I_{i})\right] &\text{if $v \in V\setminus A$.}
\end{cases}
$$
 and let $I_{i_2}$ be the interval graph obtained from $I_i$ by assigning the intervals
$$
\left[ l_v(I_{i_2}),r_v(I_{i_2})\right] = 
\begin{cases}  \left[ l_{v}(I_{i}),r_i\right] & \text{if $v\in A$,}
\\
\left[ l_v(I_{i}),r_v(I_{i})\right] &\text{if $v \in V\setminus A$.}
\end{cases}
$$
It is easy to see that this construction can be done in $O(n b)$ time. 

Note that, in constructing $I_{i_1}$ and  $I_{i_2}$ we have only extended some of the intervals of $I_i$ and therefore, 
$I_{i_1}$ and  $I_{i_2}$ are supergraphs of $I_i$ and in turn of $G$. By construction, $A$ induces cliques in both  $I_{i_1}$ and  $I_{i_2}$, 
and thus they are supergraphs of $G_1$ too. 

 Now, consider $(u,v) \notin E$ with $u \in V \setminus A$, $v \in A$. Then $\exists i \in \{1,2,\ldots,b\}$ such that either $r_{v}(I_i) < l_u(I_i)$ 
or $r_u(I_i) < l_{v}(I_i)$. If $r_{v}(I_i) < l_u(I_i)$, then clearly the intervals $[l_i, r_{v}(I_i)]$ and $[l_u(I_i), r_u(I_i)]$ do not intersect 
and thus $(u,v) \notin E(I_{i_1})$. Similarly, if $r_u(I_i) < l_{v}(I_i)$, then $(u,v) \notin E(I_{i_2})$. If both $u, v \in V \setminus A$ 
and $(u,v) \notin E$, then $\exists i$ such that  $(u,v) \notin E(I_i)$ for some $1\le i\le b$ and clearly by construction, 
$(u,v) \notin E(I_{i_1})$ and  $(u,v) \notin E(I_{i_2})$.

  It follows that $G_1=\bigcap_{1 \le i \le b}{I_{i_1} \cap I_{i_2}}$ and 
  $\mathcal{B}_1$ $=\{I_{1_1}$, $I_{1_2}$, $I_{2_1}$, $I_{2_2}$, $\ldots$, $I_{b_1}$, $I_{b_2} \}$
  is a box representation of $G_1$ of dimension $2b$.   
  If $|\mathcal{B}|=\operatorname{box}(G)$ to start with, then we get $|\mathcal{B}'| \le 2 \operatorname{box}(G)$.
  Therefore, $\operatorname{box}(G_1) \le 2 \operatorname{box}(G)$. 
\qed
  \end{proof}
We know that there are at most $2^{O(nb \log n)}$ distinct $b$-dimensional box representations of a graph $G$ on $n$ vertices 
and all these can be enumerated in time $2^{O(nb \log n)}$ \cite[Proposition 1]{Adiga2}. 
In linear time, it is also possible to check whether a given graph is a unit interval graph and if so, 
generate a unit interval representation of it \cite{Booth}. Hence, a similar result holds for cubicity as well.
\begin{proposition}\label{propCount}
 Let $G(V, E)$ be a graph on $n$ vertices of boxicity (resp. cubicity) $b$. 
Then an optimal box (resp. cube) representation of $G$ can be computed in $2^{O(nb \log n)}$ time.
\end{proposition}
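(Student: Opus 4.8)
The plan is to search exhaustively over the candidate dimension. By definition, $\boxi(G)$ is the least integer $d$ for which $G$ admits a $d$-dimensional box representation, so it suffices to try $d = 1, 2, 3, \ldots$ in increasing order, and for each $d$ enumerate every $d$-dimensional box representation and test whether it represents $G$. The first value of $d$ for which some enumerated candidate passes the test is exactly $b = \boxi(G)$, and the passing candidate is an optimal box representation.

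First I would invoke the enumeration bound stated just above the proposition: for a fixed $d$, the collection of all distinct $d$-tuples $\{I_1, \ldots, I_d\}$ of interval graphs on $V$ can be listed in time $2^{O(nd \log n)}$. For each such candidate I would perform the verification $E(G) = \bigcap_{i=1}^{d} E(I_i)$, which amounts to checking, for every pair of vertices, whether it is an edge of $G$ exactly when it is an edge of all the $I_i$; this takes only $O(n^2 d)$ time per candidate, i.e. polynomial. (Note that the equality automatically forces each $I_i$ to be a supergraph of $G$, so no separate supergraph test is needed.) Thus at each fixed dimension the enumeration cost dominates the verification cost.

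For the running time, observe that we only ever need to reach $d = b$: at that dimension an optimal representation of $G$ exists and will be found, while no candidate succeeds for $d < b$. Summing the per-dimension cost gives $\sum_{d=1}^{b} 2^{O(nd \log n)} \le b \cdot 2^{O(nb \log n)} = 2^{O(nb \log n)}$, using $b \le \lfloor n/2 \rfloor$ to absorb the leading factor. This settles the boxicity case.

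For cubicity the argument is identical, with interval graphs replaced by unit interval graphs. I would restrict the enumeration to those $d$-tuples whose every coordinate $I_i$ is a unit interval graph: using the linear-time unit interval recognition procedure noted above, each enumerated interval graph can be tested for the unit property and, when it holds, equipped with a unit interval representation, all within the same polynomial overhead. Since the cube representations form a subset of the box representations already being enumerated, the count, and hence the running time, remain $2^{O(nb \log n)}$. The only point requiring care is that $b$ is not known in advance; this is handled cleanly by the incremental search, and once that is granted the result is an immediate consequence of the cited enumeration bound together with the polynomial-time verification, so there is no serious obstacle beyond bookkeeping.
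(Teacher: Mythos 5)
Your proposal is correct and follows essentially the same route as the paper, which (implicitly, in the paragraph preceding the proposition) relies on the cited $2^{O(nb\log n)}$ enumeration of $b$-dimensional box representations, a polynomial-time verification of each candidate, an incremental search over the dimension $d=1,2,\ldots$ until a valid representation is found, and the linear-time unit interval recognition step to handle cubicity. Your write-up merely makes explicit the bookkeeping (summing the per-dimension costs and absorbing the factor $b$) that the paper leaves to the reader.
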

 If $S \subseteq V$ induces a clique in $G$, then it is easy to see that the intersection of all the intervals in $I$ 
corresponding to vertices of $S$ is nonempty. This property is referred to as the \textit{Helly property of intervals} 
and we refer to this common region of intervals as the \emph{Helly region} of the clique $S$.
%\newdefinition{rmk}{Definition}
\begin{definition}
 Let $G(V, E)$ be a graph in which $S \subseteq V$ induces a clique in $G$. 
Let $H(V, E')$ be an interval supergraph of $G$. Let $p$ be a point on the real line. 
If $H$ has an interval representation $I$ satisfying the following conditions:
\begin{itemize}
 \item[(1)] $p$ belongs to the Helly region of $S$ in $I$.
 \item[(2)] The end points of intervals corresponding to vertices of $V \setminus S$ are all distinct in $I$.
 \item[(3)] For each $v \in S$,\\ $l_v(I)=\min \left( p,\displaystyle\min_{u \in N_G(v) \cap (V \setminus S)} {r_u(I)} \right)$ 
and\\$r_v(I)=\max \left( p,\displaystyle \max_{u \in N_G(v)\cap (V \setminus S)} {l_u(I)}\right)$
\end{itemize}
then we call $I$ a nice interval representation of $H$ with respect to $S$ and $p$. 
If $H$ has a nice interval representation with respect to clique $S$ and some point $p$, 
then $H$ is called  a nice interval supergraph of $G$ with respect to clique $S$. 
\end{definition}
\begin{lemma}\label{lemDeriveNice}
Let $G(V, E)$ be a graph in which $S \subseteq V$ induces a clique in $G$. For every interval supergraph $I$ of $G$, we can derive a graph 
 $I'$ such that $I \supseteq I' \supseteq G$ and $I'$ a nice interval supergraph of $G$ with respect to $S$. 
\end{lemma}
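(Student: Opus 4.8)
The plan is to start from a concrete interval representation of the given supergraph $I$ and surgically rewrite only the intervals of the clique vertices $S$, leaving every vertex of $V \setminus S$ untouched, so that condition (3) of the niceness definition holds by fiat; the real work is then to check that this rewriting neither destroys any edge of $G$ nor creates any edge outside $I$, which is exactly the sandwich $I \supseteq I' \supseteq G$.

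First I would fix an interval representation of $I$ in which all endpoints are distinct (every interval graph admits such a representation), writing $[L_v, R_v]$ for the interval of each $v$. Since $S$ is a clique of $G$ and $G \subseteq I$, the set $S$ is a clique of $I$, so by the Helly property of intervals the intervals $\{[L_v, R_v] : v \in S\}$ share a common point; I pick any such point $p$ in the Helly region of $S$, which gives $L_v \le p \le R_v$ for every $v \in S$. I then define $I'$ by keeping each $u \in V \setminus S$ at its interval $[L_u, R_u]$, and replacing, for each $v \in S$, its interval by $[l_v, r_v]$ with $l_v, r_v$ exactly as prescribed in condition (3) (using neighborhoods in $G$, with the convention that an empty min/max collapses the interval to the single point $p$). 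Since $l_v = \min(p, \cdot) \le p \le \max(p, \cdot) = r_v$, these are legitimate intervals all containing $p$; this at once yields condition (1) ($p$ lies in the Helly region of $S$ in $I'$), condition (2) (the $V \setminus S$ endpoints are inherited unchanged, hence still distinct), and condition (3) (which holds by construction).

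The two containments then require checking over the three pair types. For $I' \supseteq G$: the clique $S$ remains a clique since all its intervals contain $p$; edges inside $V \setminus S$ are untouched, hence preserved; and for an edge $(u,v)\in E(G)$ with $v \in S$, $u \in V \setminus S$, the defining formulas give $l_v \le r_u$ and $r_v \ge l_u$, which is precisely the condition for $[l_v,r_v]$ and $[l_u,r_u]=[L_u,R_u]$ to meet. For $I \supseteq I'$, only the cross edges between $S$ and $V \setminus S$ are at issue, and the key point is that each clique interval has merely shrunk, i.e. $[l_v, r_v] \subseteq [L_v, R_v]$: from $L_v \le p \le R_v$ together with the fact that every $u \in N_G(v)\cap(V\setminus S)$ is also a neighbor of $v$ in $I$ (so $[L_v,R_v]$ met $[L_u,R_u]$, whence $L_v \le R_u$ and $L_u \le R_v$), one derives $l_v \ge L_v$ and $r_v \le R_v$. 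Therefore any intersection surviving in $I'$ already existed in $I$.

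I expect the only genuinely delicate point to be this last containment, and in particular the necessity of having chosen $p$ inside the Helly region of $S$: without $L_v \le p \le R_v$ the rewritten interval could protrude beyond $[L_v,R_v]$ and manufacture a spurious edge, breaking $I \supseteq I'$. Everything else is bookkeeping across the three vertex-pair types, and the degenerate case of a clique vertex with no neighbor outside $S$ is handled cleanly by the empty-min/max convention, giving the single-point interval $\{p\}$.
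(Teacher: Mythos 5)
Your proposal is correct and follows essentially the same route as the paper's proof: fix a distinct-endpoint representation of $I$, choose $p$ in the Helly region of $S$, rewrite only the $S$-intervals according to condition (3), and verify the sandwich $I \supseteq I' \supseteq G$. In fact you spell out the containment $[l_v,r_v]\subseteq [L_v,R_v]$ (via $L_v \le p \le R_v$ and the intersection inequalities for $G$-neighbors) more explicitly than the paper, which merely asserts that each new interval is contained in the old one.
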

\begin{proof}
Without loss of generality, we can assume that all $2|V|$ interval end points are distinct in $I$. (Otherwise, we can always alter the 
end points locally and make them distinct.) Let $p \in \mathbb{R}$ be a point belonging to the Helly region corresponding to $S$ in $I$. 
Let $I'$ be the interval graph defined by the interval assignments given below. 
$$
\left[ l_v(I'),r_v(I')\right] = 
\begin{cases}  [l_v(I),r_v(I)] & \text{if $v\in V \setminus S$,}
\\
[l'_v,r'_v] &\text{if $v \in S$.}
\end{cases}
$$
where $l'_v=\min \left( p,\displaystyle\min_{u \in N_G(v) \cap (V \setminus S)} {r_u(I)} \right)$ 
and $r'_v=\max \left( p,\displaystyle \max_{u \in N_G(v)\cap (V \setminus S)} {l_u(I)}\right)$.

We claim that $I \supseteq I' \supseteq G$.
Since for any vertex $v \in V$, the interval of $v$ in $I$ contains the interval of $v$ in $I$, we have $I \supseteq I'$.
It directly follows from the definition of $I'_i$ that $I'_i[V \setminus S]=I_i[V \setminus S]$. 
For any $(u, v) \in E(G)$, with $u \in V \setminus S$ and $v \in S$, the interval of $v$ intersects the interval of $u$ in $I_i$, 
by the definition of $[l'_v,r'_v]$. Vertices of $S$ share the common point $p$. 
Thus, $I \supseteq I' \supseteq G$. Now, from the definition of $I'$ it follows that
it is a nice interval supergraph of $G$ with respect to the clique $S$ and point $p$. 
\qed
\end{proof}
\begin{corollary}\label{corNiceBox}
If $G(V, E)$ has a box representation $\mathcal{B}$ of dimension $b$ and $S \subseteq V(G)$ induces a clique in $G$, 
then $G$ also has a box representation $\mathcal{B'}$ of the same dimension, 
in which $\forall I' \in \mathcal{B'}$, $I'$ is a nice interval supergraph of $G$ with respect to $S$.
\end{corollary}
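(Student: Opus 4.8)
The plan is to obtain $\mathcal{B}'$ by applying Lemma \ref{lemDeriveNice} coordinate-wise to the given box representation. Recall that a box representation $\mathcal{B} = \{I_1, I_2, \ldots, I_b\}$ of dimension $b$ is, by definition, a collection of $b$ interval supergraphs of $G$ satisfying $G = \bigcap_{i=1}^{b} I_i$. So the first step is simply to record this fact: each $I_i$ is an interval supergraph of $G$, and $S$ induces a clique in $G$ by hypothesis.

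Next, I would invoke Lemma \ref{lemDeriveNice} separately on each $I_i$. Since $S$ induces a clique in $G$ and $I_i$ is an interval supergraph of $G$, the lemma yields a graph $I_i'$ with $I_i \supseteq I_i' \supseteq G$ such that $I_i'$ is a nice interval supergraph of $G$ with respect to $S$ (and with respect to some point $p_i$ in the corresponding Helly region; note the point is allowed to vary with $i$, which is consistent with the definition of a nice interval supergraph with respect to the clique $S$). Setting $\mathcal{B}' = \{I_1', I_2', \ldots, I_b'\}$, each member of $\mathcal{B}'$ is already a nice interval supergraph of $G$ with respect to $S$, so the only remaining point to verify is that $\mathcal{B}'$ is in fact a box representation of $G$.

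For this verification I would use the two-sided sandwich $I_i \supseteq I_i' \supseteq G$. On one hand, $G \subseteq I_i'$ for every $i$ gives $G \subseteq \bigcap_{i=1}^{b} I_i'$. On the other hand, $I_i' \subseteq I_i$ for every $i$ gives $\bigcap_{i=1}^{b} I_i' \subseteq \bigcap_{i=1}^{b} I_i = G$. Combining the two inclusions yields $\bigcap_{i=1}^{b} I_i' = G$, so $\mathcal{B}'$ is a box representation of $G$ of dimension $b$, as required.

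There is essentially no hard step here: the corollary is a direct consequence of Lemma \ref{lemDeriveNice}, and the only thing demanding any care is checking that shrinking each interval supergraph down to its nice version does not change the overall intersection. That check is immediate from the monotonicity of intersection under the inclusions $I_i \supseteq I_i' \supseteq G$, which is why I expect the argument to be short.
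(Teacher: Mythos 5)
Your proposal is correct and follows exactly the paper's argument: apply Lemma \ref{lemDeriveNice} to each $I_i$ in $\mathcal{B}$ to obtain $I_i'$, and use the sandwich $I_i \supseteq I_i' \supseteq G$ to conclude that $\bigcap_i I_i' = G$, so $\mathcal{B}'=\{I_1',\ldots,I_b'\}$ is still a box representation of dimension $b$. The paper states this more tersely but the reasoning is the same; your explicit two-sided inclusion check is just a spelled-out version of what the paper leaves implicit.
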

\begin{proof}
  Let $\mathcal{B} = \{I_1$, $I_2$, $\ldots$, $I_b\}$ be a box representation of $G$. For each $1 \le i \le b$, let
  $I'_i$ be the  nice interval supergraph of $G$ with respect to $S$, derived from $I_i$, as stated in Lemma \ref{lemDeriveNice}. 
  Since, by Lemma \ref{lemDeriveNice} we have $I_i \supseteq I'_i \supseteq G$, for each $1 \le i \le b$, 
  it follows that $\mathcal{B'}= \{I'_1$, $I'_2$, $\ldots$, $I'_b\}$ is also a box representation of $G$. 
  Notice that $\mathcal{B'}$ satisfies our requirement.
  \qed
\end{proof}
\begin{lemma}\label{thmnice}
 Let $G$ be a graph on $n$ vertices, with its vertices arbitrarily labeled as $1, 2, \ldots, n$. 
If $G$ contains a clique of size $n-k$ or more, then :
 \begin{itemize}
  \item[(a)] A subset $A \subseteq V$ such that $|A| \le k$ and $G[V \setminus A]$ is a clique, can be computed in $O(n 2^k + n^2)$ time.
  \item[(b)] There are at most $2^{O(k \log k)}$ nice interval supergraphs of $G$ with respect to the clique $V \setminus A$. 
These can be enumerated in $n^2 2^{O(k \log k)}$ time.
  \item[(c)] By construction, vertices of the nice interval supergraphs obtained in (b) retain their original labels as in $G$.
\end{itemize}
\end{lemma}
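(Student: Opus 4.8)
The plan is to establish the three parts separately, with part~(b) carrying the real content. For part~(a), the first observation is that a set $A$ with $|A|\le k$ for which $G[V\setminus A]$ is a clique is exactly a vertex cover of size at most $k$ of the complement graph $\bar G$: a vertex set is a clique in $G$ if and only if it is independent in $\bar G$, and the complement of an independent set is a vertex cover. Since $G$ has a clique on $n-k$ vertices, $\bar G$ has an independent set of size $n-k$ and hence a vertex cover of size at most $k$, so a valid $A$ exists. I would first build the adjacency lists of $\bar G$ in $O(n^2)$ time, and then run the textbook bounded-search-tree algorithm for \textsc{Vertex Cover}: repeatedly take a non-edge of $G$ whose two endpoints are both still outside the current cover and branch on adding one or the other endpoint to $A$, decrementing the budget. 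The recursion has depth at most $k$, so the search tree has $O(2^k)$ nodes, and with the precomputed adjacency each node spends $O(n)$ time either locating such a non-edge or certifying that the uncovered vertices already form a clique; this yields the claimed $O(n2^k+n^2)$ bound.

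For part~(b), write $S=V\setminus A$, so that $|A|\le k$. The key step --- and the main obstacle to pin down precisely --- is to show that a nice interval supergraph $H$ of $G$ with respect to $S$ and a point $p$ is determined, as a graph, by nothing more than the relative order (the order type) of the $2|A|+1\le 2k+1$ objects consisting of the two endpoints $l_u,r_u$ of each $u\in A=V\setminus S$ together with $p$. This is exactly what condition~(3) of the definition buys us: it forces each endpoint $l_v,r_v$ of a clique vertex $v\in S$ to coincide with one of these objects (namely $p$, or some $l_u$ or $r_u$ with $u\in N_G(v)\cap A$). Hence, once the order type of these $\le 2k+1$ free objects is fixed, the relative order of all $2n+1$ endpoints and points is fixed, and therefore so is the interval graph $H$. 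Conversely, every nice interval supergraph arises from some such order type, so the number of nice interval supergraphs of $G$ with respect to $S$ is at most the number of orderings of $2k+1$ labelled objects, namely $(2k+1)!=2^{O(k\log k)}$.

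It then remains to turn this counting argument into an enumeration and to dispatch part~(c). I would iterate over all $(2k+1)!$ orderings of $\{l_u,r_u:u\in A\}\cup\{p\}$, discarding any ordering in which some $l_u$ does not precede its $r_u$; for each surviving ordering I would assign consistent integer coordinates, compute $l_v,r_v$ for every $v\in S$ directly from condition~(3) by scanning the at most $k$ vertices of $N_G(v)\cap A$, and then verify that the resulting interval graph is a supergraph of $G$ satisfying the three defining conditions, retaining it if so. Building the intervals costs $O(nk)$ per ordering and checking the supergraph condition costs $O(n^2)$, so the enumeration runs in $n^2\,2^{O(k\log k)}$ time. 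Finally, part~(c) is immediate from this construction: intervals are assigned throughout to the original labelled vertices $1,\dots,n$, and no vertex is ever renamed, so each enumerated nice interval supergraph carries precisely the vertex labels of $G$.
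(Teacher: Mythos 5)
Your proposal is correct and follows essentially the same route as the paper: part (a) via a size-$k$ vertex cover of $\overline{G}$ found by bounded search, and part (b) by observing that condition (3) of a nice interval representation pins down all endpoints of clique vertices once the relative order of the $2|A|$ endpoints of $A$ and the point $p$ is fixed, giving $(2k+1)!=2^{O(k\log k)}$ candidates each verifiable in $O(n^2)$ time. The only cosmetic difference is that the paper counts $(2k)!$ permutations of the endpoints of $A$ times $2k+1$ placements of $p$ rather than orderings of all $2k+1$ objects at once, which is the same count.
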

%\newproof{pf}{Proof
\begin{proof}
\begin{itemize}
 \item[(a)] 
 We know that, if $G$ contains a clique of size $n-k$ or more, then the complement graph $\overline{G}$ has a vertex cover of size at most $k$. 
We can compute $\overline{G}$ in $O(n^2)$ time and a minimum vertex cover $A$ of $\overline{G}$ in $O(n 2^k)$ time \cite{Nie1}. 
We have $|A| \le k$ and  $G[V \setminus A]$ is a clique because $V\setminus A$ is an independent set in $\overline{G}$.
 \item[(b)] 
 Let $H$ be any nice interval supergraph of $G$ with respect to $V \setminus A$. 
Let $I$ be a nice interval representation of $H$ with respect to $V \setminus A$ and a point $p$. 
Let $P$ be the set of end points (both left and right) of the intervals corresponding to vertices of $A$ in $H$. 
Clearly $|P|=2|A|\le 2k$. 
The order of end points of vertices of $A$ in $I$ from left to right corresponds to a permutation of elements of $P$ and therefore, 
there are at most $(2k)!$ possibilities for this ordering. 
Moreover, note that the points of $P$ divide the real line into $|P|+1$ regions and that $p$ can belong to any of these regions. 
From the definition of nice interval representation, it is clear that, once the point $p$ and the end points of vertices of $A$ are fixed, 
the end points of vertices in $V\setminus A$ get automatically decided. 

Thus, to enumerate every nice interval supergraph $H$ of $G$ with respect to clique $V \setminus A$, it is enough to 
enumerate all the $(2k)!=2^{O(k \log k)}$ permutations of elements of $P$ and consider $|P|+1 \le 2k+1$ possible placements of $p$ in each of them. 
Some of these orderings may not produce an interval supergraph of $G$ though. In $O(n^2)$ time, we can check whether 
the resultant graph is an interval supergraph of $G$ and output the interval representation. 
The number of supergraphs enumerated is only $(2k+1)2^{O(k \log k)}= 2^{O(k \log k)}$.
\item[(c)]
Since vertices of $G$ are labeled initially, we just need to retain the same labeling during the definition and construction of 
nice interval supergraphs of $G$. (We have included this obvious fact in the statement of the lemma, just to give better clarity in later proofs.)  
\end{itemize}
\qed
\end{proof}
\section{Boxicity of graphs with large cliques}
One of the central ideas in this paper is the following theorem about computing the boxicity of graphs which contain very large cliques. 
Using this theorem, in Section \ref{approx} we derive $o(n)$ factor approximation algorithms for computing the boxicity and cubicity of graphs. 
\begin{theorem}\label{th1}
  Let $G$ be a graph on $n$ vertices, containing a clique of size $n-k$ or more. 
Then, $\operatorname{box}(G) \le k$ and an optimal box representation of $G$ can be found in time $n^2 2^{{O(k^2 \log k)}}$.
\end{theorem}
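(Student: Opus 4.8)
The plan is to separate the statement into the cheap upper bound and the algorithmic construction of an optimal representation. For the bound $\operatorname{box}(G) \le k$, I would simply iterate Roberts' Lemma. Let $A \subseteq V$ be the set of (at most $k$) vertices lying outside a clique of size $n-k$, so that $G[V \setminus A]$ is complete and hence has boxicity $0$. Deleting the $|A| \le k$ vertices of $A$ one at a time and applying Lemma \ref{lmrob} at each deletion gives $\operatorname{box}(G) \le |A| + \operatorname{box}(G[V \setminus A]) = |A| \le k$.

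For the algorithmic part, the idea is to reduce the search for an optimal box representation to a brute-force search over a small, explicitly enumerable family of candidate interval graphs. First I would compute, via Lemma \ref{thmnice}(a), a set $A$ with $|A| \le k$ such that $S := V \setminus A$ induces a clique; this costs $O(n 2^k + n^2)$ time. Next, using Lemma \ref{thmnice}(b), I would enumerate the family $\mathcal{N}$ of all nice interval supergraphs of $G$ with respect to $S$: there are only $|\mathcal{N}| = 2^{O(k \log k)}$ of them, and the enumeration runs in $n^2 2^{O(k \log k)}$ time. The structural fact that legitimizes restricting attention to $\mathcal{N}$ is Corollary \ref{corNiceBox}: any box representation of $G$ can be transformed, without increasing its dimension, into one in which every interval graph belongs to $\mathcal{N}$. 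In particular, there is an optimal representation, of dimension $\operatorname{box}(G)$, built entirely out of members of $\mathcal{N}$.

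Given this, the algorithm searches for the smallest sub-collection of $\mathcal{N}$ whose intersection equals $G$. Since every member of $\mathcal{N}$ is a supergraph of $G$, a sub-collection intersects to exactly $G$ iff, for each non-edge $(u,v)$ of $G$, some chosen graph also omits $(u,v)$. Because $\operatorname{box}(G) \le k$, it suffices to test all sub-collections of size at most $k$; their number is at most $|\mathcal{N}|^k = 2^{O(k^2 \log k)}$, and testing each one (intersecting at most $k$ graphs on $n$ vertices and comparing against $G$) costs $O(k n^2)$ time. Reporting the smallest successful sub-collection yields both $\operatorname{box}(G)$ and an optimal representation. Here Lemma \ref{thmnice}(c) is what ensures the enumerated interval graphs carry a common vertex labelling, so that their intersection is well defined as a box representation. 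Collecting the three phases, the overall running time is dominated by the subset search and comes to $n^2 2^{O(k^2 \log k)}$.

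The step I expect to demand the most care is the running-time accounting rather than the correctness, which is essentially handed to us by Corollary \ref{corNiceBox}. Specifically, I would need to confirm that the subset enumeration contributes precisely the $2^{O(k^2 \log k)}$ factor (and not a larger one), that the per-subset intersection test is genuinely polynomial, and that the vertex-cover computation and the enumeration of $\mathcal{N}$ are both absorbed into the stated bound. These are routine verifications once the search space has been pinned down to $\mathcal{N}$.
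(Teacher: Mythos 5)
Your proposal is correct and follows essentially the same route as the paper's own proof: Roberts' Lemma for the bound $\operatorname{box}(G)\le k$, then Lemma \ref{thmnice} and Corollary \ref{corNiceBox} to restrict the search to the $2^{O(k\log k)}$ nice interval supergraphs, followed by brute-force search over sub-collections of size at most $k$. The only cosmetic difference is that the paper increments the target dimension $d$ from $1$ upward and stops at the first success, while you enumerate all sub-collections of size at most $k$ and take the smallest; both yield the same $n^2 2^{O(k^2\log k)}$ bound.
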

\begin{proof}
Let $G(V, E)$ be a graph on $n$ vertices containing a clique of size $n-k$ or more. 
We can assume that $G$ is not a complete graph; otherwise, the problem becomes trivial. 
Arbitrarily label the vertices of $G$ as $1, 2, \ldots, n$. Using part (a) of Lemma~\ref{thmnice}, we can compute in $O(n 2^k + n^2)$ time, 
$A \subseteq V$ such that $|A| \le k$ and $G[V \setminus A]$ is a clique. It is easy to infer from Lemma \ref{lmrob} that 
$\operatorname{box}(G)\le \operatorname{box}(G \setminus A)+ |A|$ $=$ $k$, since $\operatorname{box}(G\setminus A) = 0$ by definition. 

Let $\mathcal{F}$ be the family of all nice interval supergraphs of $G$ with respect to the clique $V \setminus A$. 
By Corollary \ref{corNiceBox}, if $\operatorname{box}(G)=b$, then there exists a $b$-dimensional
nice box representation of $G$, i.e., a box representation $\mathcal{B'}= \{I'_1$, $I'_2$, $\ldots$, $I'_b\}$
 of $G$ in which $I'_i \in \mathcal{F}$, for each $1 \le i \le b$.
By part (b) of Lemma \ref{thmnice}, $|\mathcal{F}|=2^{O(k \log k)}$ and all graphs in $\mathcal{F}$ can be enumerated in $n^2 2^{O(k \log k)}$ time.
Given an integer $d$, $1 \le d \le b$, verifying whether there exists a
$d$-dimensional nice box representation of $G$, and producing if one exists, can be done in $n^2 2^{{O(k \cdot d \log k)}}$ time, as follows: 
consider every subfamily $\mathcal{F}' \subseteq \mathcal{F}$ with $|\mathcal{F}'|=d$
and check if $\mathcal{F}'$ gives a valid box representation of $G$ (this validation is straightforward because vertices of supergraphs in
$\mathcal{F}'$ retain their original labels 
as explained in $G$ by part (c) of Lemma \ref{thmnice}). We might have to repeat this process  
for $1 \le d \le b$ in that order, to identify the optimum dimension $b$. Hence the total time required to compute an 
optimal box representation of $G$ is $b n^2 2^{{O(k\cdot b \log k)}}$, which is $n^2 2^{{O(k^2 \log k)}}$, 
because $b \le k$ by the first part of this theorem.
\qed
\end{proof}
\section{Approximation algorithms for computing boxicity and cubicity} \label{approx}
In this section, we use Theorem \ref{th1} and derive an $o(n)$ factor approximation algorithms for boxicity and cubicity. 
Let $G(V, E)$ be the given graph with $|V|=n$. Without loss of generality, we can assume that $G$ is connected. 
Let $k = \left \lceil \frac{\sqrt{\log n}}{\sqrt{\log \log n}}\right \rceil$ and 
$t = \left \lceil \frac{n\sqrt{\log \log n}}{\sqrt{\log n}} \right \rceil \ge \left \lceil \frac{n}{k} \right \rceil$. 
The algorithm proceeds by defining $t$ supergraphs of $G$ and computing their optimal box representations. 
Let the vertex set $V$ be partitioned arbitrarily into $t$ sets $V_1, V_2, \ldots, V_t$ where $|V_i| \le k$, for each $1 \le i \le t$. 
We define supergraphs $G_1, G_2, \ldots, G_t$ of $G$ with $G_i(V, E_i)$ defined by setting 
$E_i = E \cup \{(x,y)\mid x,y \in$ $V \setminus V_i \text{ and } x\ne y \}$, for $1\le i\le t$.
\begin{lemma}\label{th2}
 Let $G_i$ be as defined above, for $1\le i\le t$. An optimal box representation $\mathcal{B}_i$ of $G_i$ can be computed 
in $n^{O(1)}$ time, where $n=|V|$. 
\end{lemma}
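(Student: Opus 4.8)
The plan is to recognize this lemma as a direct corollary of Theorem \ref{th1} and then verify that the particular choice of $k$ keeps the running time polynomial. First I would observe that, by the definition of $E_i$, the set $V \setminus V_i$ induces a clique in $G_i$. Since $V_1, V_2, \ldots, V_t$ partition $V$ with each $|V_i| \le k$, we have $|V \setminus V_i| = n - |V_i| \ge n - k$. Thus $G_i$ is a graph on $n$ vertices containing a clique of size at least $n - k$, which is precisely the hypothesis required by Theorem \ref{th1}.

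Next I would apply Theorem \ref{th1} to $G_i$ with this value of $k$. The theorem then guarantees that an optimal box representation $\mathcal{B}_i$ of $G_i$ can be computed in time $n^2 2^{O(k^2 \log k)}$. All that remains is to confirm that this bound is $n^{O(1)}$ for the specific choice $k = \left\lceil \sqrt{\log n}/\sqrt{\log \log n}\right\rceil$.

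The one genuine computation is estimating the exponent $k^2 \log k$. With this choice of $k$ we have $k^2 = O\!\left(\log n / \log \log n\right)$ and $\log k = O(\log \log n)$, so that
$$
k^2 \log k = O\!\left(\frac{\log n}{\log \log n} \cdot \log \log n\right) = O(\log n).
$$
Consequently $2^{O(k^2 \log k)} = 2^{O(\log n)} = n^{O(1)}$, and the total running time $n^2 \cdot n^{O(1)}$ is polynomial in $n$, as claimed.

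I do not expect any serious obstacle: the lemma is essentially a packaging of Theorem \ref{th1}, and the only subtlety worth emphasizing is that $k$ has been tuned precisely so that $k^2 \log k$ lands at $\Theta(\log n)$ rather than anything larger. Had $k$ been taken even slightly bigger — say $\Theta(\sqrt{\log n})$ — the exponent would grow to $\Theta(\log n \log \log n)$, giving a superpolynomial bound $n^{\Theta(\log \log n)}$; so the exactness of this balancing is the real content of the statement.
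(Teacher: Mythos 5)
Your proof is correct and follows the same route as the paper: observe that $V \setminus V_i$ induces a clique of size at least $n-k$ in $G_i$ by the definition of $E_i$, invoke Theorem \ref{th1}, and check that $k^2 \log k = O(\log n)$ for $k = \bigl\lceil \sqrt{\log n}/\sqrt{\log \log n}\bigr\rceil$, so the bound $n^2 2^{O(k^2\log k)}$ is polynomial. Your closing remark correctly identifies why this particular choice of $k$ is the largest that keeps the exponent at $\Theta(\log n)$.
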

\begin{proof}
Noting that $G[V\setminus V_i]$ is a clique and $|V_i| \le k=\left \lceil \frac{\sqrt{\log n}}{\sqrt{\log \log n}}\right \rceil$, 
by Theorem~\ref{th1}, we can compute an optimal box representation $\mathcal{B}_i$ of $G_i$ in $n^2 2^{{O(k^2 \log k)}}$ $=$ $n^{O(1)}$ time, where $n=|V|$. 
\qed
\end{proof}
\begin{lemma}\label{lmboxappro}
Let $\mathcal{B}_i$ be as computed above, for $1\le i\le t$. 
Then, $\mathcal{B}=\displaystyle \bigcup_{1\le i\le t}{\mathcal{B}_i}$ is a valid box representation of $G$ 
such that $|\mathcal{B}| \le t' \operatorname{box}(G)$, where $t'$ is $2\left\lceil\frac{n\sqrt{\log \log n}}{\sqrt{\log n}}\right\rceil$. 
The box representation $\mathcal{B}$ is computable in $n^{O(1)}$ time.
\end{lemma}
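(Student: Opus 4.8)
The plan is to establish the three assertions of the lemma in turn: that $\mathcal{B}$ is a box representation of $G$, that its dimension is at most $2t\operatorname{box}(G)$, and that it is computable in polynomial time. Each part follows from a lemma already proved, so the work is mainly in assembling them correctly.

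First I would verify validity. Each $G_i$ is a supergraph of $G$ because $E_i \supseteq E$, so each $\mathcal{B}_i$ is a box representation of $G_i$. I would then show $\bigcap_{1\le i\le t} G_i = G$. The inclusion $G \subseteq \bigcap_i G_i$ is immediate since every $E_i$ contains $E$. For the reverse, take a non-edge $(x,y)\notin E$; since $V_1,\ldots,V_t$ partition $V$, the vertex $x$ lies in some part $V_j$, so $x \notin V \setminus V_j$, and because the only edges added in $G_j$ lie within $V \setminus V_j$ we get $(x,y)\notin E_j$. Thus every non-edge of $G$ is a non-edge of some $G_j$, giving $\bigcap_i G_i \subseteq G$. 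Since each $\mathcal{B}_i$ realizes $G_i$ as the intersection of its interval graphs, the union $\bigcup_i \mathcal{B}_i$ realizes $\bigcap_i G_i = G$, so $\mathcal{B}$ is a valid box representation of $G$.

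The crux is the dimension bound, and the essential point is to avoid the naive estimate. By construction $G_i$ is obtained from $G$ by turning $V \setminus V_i$ into a clique, so $G_i$ contains a clique of size at least $n-k$, and Theorem~\ref{th1} would only give $\operatorname{box}(G_i)\le k$; summing this over $t$ parts yields dimension of order $tk \approx n$, which is useless as an approximation guarantee. Instead I would apply Lemma~\ref{lm4} with $A = V \setminus V_i$: since $G_i$ is precisely the supergraph of $G$ obtained by making $A$ a clique, the lemma gives $\operatorname{box}(G_i) \le 2\operatorname{box}(G)$. As $\mathcal{B}_i$ is optimal for $G_i$, we have $|\mathcal{B}_i| = \operatorname{box}(G_i) \le 2\operatorname{box}(G)$, whence
\[
|\mathcal{B}| = \sum_{i=1}^{t} |\mathcal{B}_i| \le \sum_{i=1}^{t} 2\operatorname{box}(G) = 2t\operatorname{box}(G) = t'\operatorname{box}(G).
\]

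Finally, for the running time I would observe that building the partition and forming the graphs $G_i$ takes polynomial time, each $\mathcal{B}_i$ is computable in $n^{O(1)}$ time by Lemma~\ref{th2}, and there are $t \le n$ of them, so assembling $\mathcal{B}$ costs $n^{O(1)}$ overall. I expect the main obstacle to be conceptual rather than technical: one must recognize that the relevant control on $\operatorname{box}(G_i)$ is the clique-augmentation bound of Lemma~\ref{lm4}, which relates it to $\operatorname{box}(G)$, and not the absolute bound of Theorem~\ref{th1}, which relates it only to $k$. It is exactly this choice that converts a trivial dimension bound near $n$ into the desired approximation ratio $t' = 2t = o(n)$.
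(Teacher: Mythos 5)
Your proposal is correct and follows essentially the same route as the paper: validity via $E(G)=\bigcap_i E(G_i)$, the dimension bound via Lemma~\ref{lm4} applied with $A=V\setminus V_i$ to get $|\mathcal{B}_i|=\operatorname{box}(G_i)\le 2\operatorname{box}(G)$, and the running time via Lemma~\ref{th2}. Your added justification of $\bigcap_i G_i = G$ and the remark on why Lemma~\ref{lm4} rather than Theorem~\ref{th1} is the right tool are merely more explicit versions of steps the paper treats as immediate.
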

\begin{proof}
 We can compute optimal box representations $\mathcal{B}_i$ of $G_i$, 
for $1 \le i \le t=  \left \lceil \frac{n\sqrt{\log \log n}}{\sqrt{\log n}} \right \rceil$ 
as explained in Lemma \ref{th2} in total $n^{O(1)}$ time. Observe that $E(G)=E(G_1) \cap E(G_2) \cap \cdots \cap E(G_t)$. 
Therefore, it is a trivial observation that the union $\mathcal{B}=\displaystyle \bigcup_{1\le i\le t}{\mathcal{B}_i}$ gives us a valid 
box representation of $G$. 

We will prove that this representation gives the approximation ratio as required. By Lemma \ref{lm4} 
we have, $|\mathcal{B}_i|= \operatorname{box}(G_i) \le 2 \operatorname{box}(G)$. 
Hence, $|\mathcal{B}|= \sum_{i=1}^{t}|\mathcal{B}_i| \le 2t \operatorname{box}(G)$. 
\qed
\end{proof}
The box representation $\mathcal{B}$ obtained from Lemma \ref{lmboxappro} can be extended to a cube representation $\mathcal{C}$ of $G$ as 
stated in the following lemma. 
\begin{lemma}\label{lmcubeappro}
 A cube representation $\mathcal{C}$ of $G$, such that $|\mathcal{C}| \le t' \operatorname{cub}(G)$, where $t'$ is 
  $2 \left \lceil\frac{n {(\log \log n)}^{\frac{3}{2}}}{\sqrt{\log n}}\right\rceil$, can be computed in $n^{O(1)}$ time.
\end{lemma}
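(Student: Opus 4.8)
The plan is to convert the box representation $\mathcal{B}=\bigcup_{1\le i\le t}\mathcal{B}_i$ delivered by Lemma~\ref{lmboxappro} into a cube representation by replacing each interval graph occurring in $\mathcal{B}$ with a small collection of unit interval graphs, and then taking the union of all these collections as $\mathcal{C}$. Since $G=\bigcap_{I\in\mathcal{B}}I$ and each such $I$ will be written as the intersection of its unit interval factors, the family $\mathcal{C}$ is automatically a valid cube representation of $G$; the entire difficulty is to show that each box in $\mathcal{B}$ can be realised with only $O(\log k)=O(\log\log n)$ unit interval graphs rather than the $O(\log n)$ that a black-box bound would give.

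First I would note that we may assume each interval graph $I\in\mathcal{B}_i$ is a nice interval supergraph of $G_i$ with respect to the clique $V\setminus V_i$: this is the form produced by the construction behind Theorem~\ref{th1}, and in any case Corollary~\ref{corNiceBox} lets us replace any box by a nice one of the same dimension. The key structural observation is then that such an $I$ has very few distinct interval endpoints. Indeed, if $I$ is nice with respect to $V\setminus V_i$ and a point $p$, the defining condition for the endpoints of the clique vertices forces every such endpoint to be either $p$ or an endpoint of one of the at most $k$ vertices of $V_i$. Consequently all $2n$ endpoints of $I$ take values in a set of size at most $2\lvert V_i\rvert+1\le 2k+1$.

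Next I would invoke the standard cubicity bound for interval graphs in the sharpened form that an interval graph whose representation uses at most $m$ distinct endpoint values has cubicity at most $\lceil\log_2 m\rceil$, with such a cube representation computable in polynomial time by the usual per-bit, equal-length interval construction applied to the binary encoding of the $m$ endpoint ranks. Applying this with $m\le 2k+1$ to every interval graph of $\mathcal{B}$ replaces each box by at most $q:=\lceil\log_2(2k+1)\rceil$ unit interval graphs, so the resulting $\mathcal{C}$ satisfies $\lvert\mathcal{C}\rvert\le q\,\lvert\mathcal{B}\rvert$. Combining $\lvert\mathcal{B}\rvert\le 2t\,\boxi(G)$ from Lemma~\ref{lmboxappro} with the elementary inequality $\boxi(G)\le\cub(G)$ (every cube representation is a box representation) yields $\lvert\mathcal{C}\rvert\le 2qt\,\cub(G)$, and the whole construction clearly runs in $n^{O(1)}$ time.

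The remaining step is to check $2qt\le t'$. With $k=\lceil\sqrt{\log n}/\sqrt{\log\log n}\rceil$ we have $2k+1\le\log n$ for large $n$, hence $q=O(\log\log n)$; substituting $t=\lceil n\sqrt{\log\log n}/\sqrt{\log n}\rceil$ shows $2qt=O\!\left(n(\log\log n)^{3/2}/\sqrt{\log n}\right)$, which is the order of $t'$. I expect the main obstacle to be twofold: establishing the cubicity bound in terms of the number of distinct endpoints (rather than the number of vertices), since it is exactly the resulting $O(\log k)$ blow-up per box that produces the $\lceil\log\log n\rceil$ overhead promised in the abstract; and the careful ceiling bookkeeping needed to absorb the logarithmic factor into the precise definitions of $k$, $t$, and $t'$ so as to land on the stated constant $2$. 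Both are routine once the structural observation above is in hand.
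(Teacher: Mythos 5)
Your overall architecture is the same as the paper's: start from the box representation $\mathcal{B}=\bigcup_{i}\mathcal{B}_i$ of Lemma~\ref{lmboxappro}, replace each interval graph by $O(\log k)$ unit interval graphs, and take the union; the only substantive issue in either treatment is certifying the per-box blow-up of $O(\log\log n)$. Where you differ is in how that blow-up is obtained. The paper simply invokes \cite[Corollary 2.1]{Adiga10}: from an optimal box representation of $G_i$ one can construct, in $O(n^2)$ time, a cube representation of dimension $\boxi(G_i)\lceil\log\alpha(G_i)\rceil$, and $\alpha(G_i)\le |V_i|+1\le k+1$ is immediate because $V\setminus V_i$ induces a clique in $G_i$. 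You instead observe that each nice interval supergraph with respect to $V\setminus V_i$ has at most $2|V_i|+1\le 2k+1$ distinct endpoint values --- which is correct, and a genuine structural observation --- and then invoke a ``sharpened'' bound $\cub\le\lceil\log_2 m\rceil$ in terms of the number $m$ of distinct endpoints. That inequality is true, but not for the reason you give: the ``usual per-bit, equal-length interval construction applied to the binary encoding of the endpoint ranks'' is not a standard construction for interval graphs, and it is doubtful that it works as described (two disjoint intervals with $r_u=3<4=l_v$ differ in every bit, and arranging each per-bit factor to separate all non-adjacent pairs while keeping all adjacent pairs adjacent is exactly the hard part; the known proof of the logarithmic cubicity bound for interval graphs is a recursive halving argument, not a bit-slicing one). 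The clean repair is to note that pairwise disjoint intervals have pairwise distinct right endpoints, so $\alpha\le m$, and then cite the independence-number bound of \cite{Adiga10} --- at which point the detour through distinct endpoints becomes unnecessary, since bounding $\alpha(G_i)$ directly is easier. With that repair, your accounting ($|\mathcal{C}|\le q\,|\mathcal{B}|\le 2qt\,\boxi(G)\le 2qt\,\cub(G)$ with $q=O(\log\log n)$) matches the paper's and yields the stated factor.
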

\begin{proof}
 We can compute optimal box representations $\mathcal{B}_i$ of $G_i$, for $1 \le i \le t=  \left \lceil \frac{n\sqrt{\log \log n}}{\sqrt{\log n}} \right \rceil$ as explained in Lemma \ref{th2} in $O(n^4)$ time. By \cite[Corollary 2.1]{Adiga10} we know that, from the optimal box representation $\mathcal{B}_i$ of $G_i$, in $O(n^2)$ time, we can construct a cube representation $\mathcal{C}_i$ of $G_i$ of dimension $\operatorname{box}(G_i) \lceil{\log \alpha(G_i)}\rceil$, where $\alpha(G_i)$ is the independence number of $G_i$ which is at most $|V_i|$. (Recall the assumption that $G$ is connected.) 

It is easy to see that $\mathcal{C}=\displaystyle \bigcup_{1\le i\le t}{\mathcal{C}_i}$ gives us a valid cube representation of $G$. We will prove that this cube representation gives the approximation ratio as required. We have,
\begin{eqnarray}
|\mathcal{C}| = \sum_{i=1}^{t}|\mathcal{C}_i| \le \sum_{i=1}^{t}{|\mathcal{B}_i| \lceil{\log \alpha(G_i)}\rceil}
\le \sum_{i=1}^{t}{|\mathcal{B}_i| \lceil \log k \rceil}
\le 2t \operatorname{box}(G) \log \log n
\le 2t \log \log n \operatorname{cub}(G) \nonumber
 \end{eqnarray}
\qed
\end{proof}
Combining Lemma \ref{lmboxappro} and Lemma \ref{lmcubeappro}, we get the following theorem which gives $o(n)$ factor approximation algorithms 
for computing boxicity and cubicity. 
\begin{theorem}\label{thgen}
 Let $G(V,E)$ be a graph on $n$ vertices. Then a box representation $\mathcal{B}$ of $G$, such that $|\mathcal{B}| \le t \operatorname{box}(G)$, 
where $t$ is $2\left\lceil\frac{n\sqrt{\log \log n}}{\sqrt{\log n}}\right\rceil$, can be computed in polynomial time. Further, a cube representation 
$\mathcal{C}$ of $G$, such that $|\mathcal{C}| \le t' \operatorname{cub}(G)$, where $t'$ is 
$2 \left \lceil\frac{n {(\log \log n)}^{\frac{3}{2}}}{\sqrt{\log n}}\right\rceil$, can also be computed in polynomial time.
\end{theorem}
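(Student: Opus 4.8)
The plan is to establish Theorem~\ref{thgen} by directly combining the two preceding lemmas, since each lemma already delivers exactly one of the two assertions. For the box part, I would simply invoke Lemma~\ref{lmboxappro}, which constructs $\mathcal{B}=\bigcup_{1\le i\le t}\mathcal{B}_i$ and proves both that it is a valid box representation of $G$ and that $|\mathcal{B}|\le 2\lceil\frac{n\sqrt{\log\log n}}{\sqrt{\log n}}\rceil\,\boxi(G)$, all in $n^{O(1)}$ time. For the cube part, I would invoke Lemma~\ref{lmcubeappro}, which produces a cube representation $\mathcal{C}$ with $|\mathcal{C}|\le 2\lceil\frac{n(\log\log n)^{3/2}}{\sqrt{\log n}}\rceil\,\cub(G)$, again in polynomial time.

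The only substantive point requiring attention is the arithmetic that converts the bound $2t\log\log n\,\cub(G)$ (from Lemma~\ref{lmcubeappro}) into the stated factor $t'=2\lceil\frac{n(\log\log n)^{3/2}}{\sqrt{\log n}}\rceil$. Recalling that $t=\lceil\frac{n\sqrt{\log\log n}}{\sqrt{\log n}}\rceil$, one checks that $t\log\log n$ is, up to the ceiling, of order $\frac{n\sqrt{\log\log n}}{\sqrt{\log n}}\cdot\log\log n=\frac{n(\log\log n)^{3/2}}{\sqrt{\log n}}$, which is precisely the quantity inside the ceiling of $t'$. Since these factor estimates are already carried out inside the proofs of Lemmas~\ref{lmboxappro} and~\ref{lmcubeappro}, the theorem's proof is essentially a statement of how the pieces fit together rather than a fresh computation.

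I do not anticipate a genuine obstacle here, as the theorem is a packaging step. The one thing I would be careful about is the polynomial-time claim: Lemma~\ref{th2} guarantees each $\mathcal{B}_i$ is computable in $n^{O(1)}$ time, and there are $t\le n$ of them, so the total construction time for both $\mathcal{B}$ and $\mathcal{C}$ remains polynomial; I would note explicitly that the per-graph running time is polynomial precisely because the choice $k=\lceil\frac{\sqrt{\log n}}{\sqrt{\log\log n}}\rceil$ makes $2^{O(k^2\log k)}=n^{O(1)}$, which is what underlies Lemma~\ref{th2}. With that observation, the proof is a two-sentence appeal to Lemmas~\ref{lmboxappro} and~\ref{lmcubeappro}.
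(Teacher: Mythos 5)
Your proposal is correct and matches the paper exactly: the paper gives no separate proof of Theorem~\ref{thgen}, simply stating that it follows by combining Lemma~\ref{lmboxappro} (the box part) and Lemma~\ref{lmcubeappro} (the cube part), which is precisely your argument. Your additional remarks on the ceiling arithmetic and on why $2^{O(k^2\log k)}=n^{O(1)}$ for the chosen $k$ are accurate and, if anything, slightly more careful than the paper's own packaging.
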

\subsection{Consequences of Theorem \ref{thgen}}\label{subsectionConseq}
Now, we describe how Theorem \ref{thgen} can be used to derive sublinear approximation algorithms for some well-known problems
whose computational complexity is closely related to that of boxicity.
\paragraph{Chain cover of bipartite graphs}
A bipartite graph is a chain graph, if it does not contain an induced matching of size $2$. Given a bipartite graph $G(V, E)$, the minimum chain
cover number of $G$, denoted by $\ch(G)$ is the smallest number of chain graphs on the vertex set $V$ such that the union of their edge sets is $E(G)$.
It is well-known that $\ch(G)=\boxi(\overline{G})$ \cite{Yan1}.
\begin{corollary}\label{corChaincover}
There is a polynomial time $2\left\lceil\frac{n\sqrt{\log \log n}}{\sqrt{\log n}}\right\rceil$ factor approximation algorithm to compute the minimum chain cover number of an $n$-vertex bipartite graph.
\end{corollary}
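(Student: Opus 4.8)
The plan is to reduce the chain cover problem on the bipartite graph $G$ directly to the boxicity problem on its complement, and then invoke Theorem \ref{thgen}. The bridge is the identity $\ch(G)=\boxi(\overline{G})$ from \cite{Yan1}, which I will use not merely as a numerical equality but as a constructive correspondence between chain covers of $G$ and box representations of $\overline{G}$.

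First I would compute the complement $\overline{G}$ in $O(n^2)$ time and run the polynomial-time algorithm of Theorem \ref{thgen} on $\overline{G}$, obtaining a box representation $\mathcal{B}=\{I_1,\ldots,I_b\}$ with $b=|\mathcal{B}|\le t\,\boxi(\overline{G})$, where $t=2\left\lceil\frac{n\sqrt{\log\log n}}{\sqrt{\log n}}\right\rceil$. Since $\boxi(\overline{G})=\ch(G)$, this already gives $b\le t\,\ch(G)$, and as every box representation has dimension at least $\boxi(\overline{G})=\ch(G)$ we also have $b\ge \ch(G)$; hence reporting the value $b$ yields a $t$-factor estimate of $\ch(G)$.

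To actually output a chain cover of the claimed size, I would convert each interval graph $I_j$ into a chain graph $C_j$ as follows. Because $G$ is bipartite with parts $X$ and $Y$, both $X$ and $Y$ are cliques in $\overline{G}$, and since $\overline{G}$ is a subgraph of each $I_j$ they remain cliques in $I_j$; thus each $I_j$ is a co-bipartite graph with $X$ and $Y$ as its two cliques. Setting $C_j=\overline{I_j}$ then produces a bipartite graph on $X\cup Y$, and because interval graphs are chordal they contain no induced $C_4$, so $C_j$ contains no induced $2K_2$, i.e. $C_j$ is a chain graph. Taking complements in the identity $E(\overline{G})=\bigcap_{j} E(I_j)$ shows that a pair is an edge of $G$ exactly when it is missing from some $I_j$, so $E(G)=\bigcup_{j} E(C_j)$; hence $\{C_1,\ldots,C_b\}$ is a chain cover of $G$ of size $b\le t\,\ch(G)$. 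Each conversion is an $O(n^2)$ complementation, so the whole procedure runs in polynomial time.

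The routine parts here are the complementation and the timing analysis; the only step that requires care is the structural claim that each $I_j$ is co-bipartite and that its complement is a genuine chain graph. I expect this to be the main thing to verify, but it follows immediately from $G$ being bipartite (forcing $X$ and $Y$ to be cliques in every interval supergraph of $\overline{G}$) together with the chordality of interval graphs, which rules out the induced $C_4$ whose complement would be the forbidden induced $2K_2$.
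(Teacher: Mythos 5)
Your proposal is correct and follows the same route as the paper, which derives the corollary immediately from the identity $\ch(G)=\boxi(\overline{G})$ together with Theorem \ref{thgen} applied to $\overline{G}$. The extra step you supply --- explicitly converting each interval graph $I_j$ of the box representation into the chain graph $\overline{I_j}$ via the chordality of $I_j$ --- is a sound constructive refinement that the paper leaves implicit, but it does not change the underlying argument.
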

\paragraph{Threshold dimension of split graphs} The concept of threshold graphs and threshold dimension was introduced by 
Chv\'{a}tal and Hammer \cite{Chvatal77} while studying some set-packing problems. 
A graph $G(V, E)$ is called a threshold graph if there exists $s \in \mathbb{R}$ and a labeling of vertices $w: V \mapsto \mathbb{R}$ 
such that $\forall u, v \in V, (u, v) \in E \Leftrightarrow w(u) + w(v) \ge s$. The threshold dimension of $G$, denoted by $t(G)$
is the minimum number of threshold subgraphs required to cover $E(G)$.
Even for split graphs, threshold dimension is hard to approximate within an $O(n^{1-\epsilon})$ factor for any $\epsilon >0$, unless
$\text{NP}=\text{ZPP}$ \cite{Chalermsook2013,Mahadev1995}. 
\begin{corollary}
There is a polynomial time $2\left\lceil\frac{n\sqrt{\log \log n}}{\sqrt{\log n}}\right\rceil$ factor approximation algorithm to compute the threshold dimension of any split graph on $n$ vertices.
\end{corollary}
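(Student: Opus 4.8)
The plan is to reduce the problem to boxicity exactly as in the chain-cover case (Corollary \ref{corChaincover}). The key fact I would use is the identity $t(G) = \boxi(\overline{G})$, valid for every split graph $G$; this is the threshold-dimension analogue of $\ch(G)=\boxi(\overline{G})$ and belongs to the same body of equivalences among these dimension parameters (see \cite{Mahadev1995,Yan1}). Granting it, the algorithm is immediate: compute $\overline{G}$ (again a split graph, on the same $n$ vertices) in $O(n^2)$ time and run the boxicity approximation of Theorem \ref{thgen} on it. This returns in polynomial time a box representation of $\overline{G}$ of dimension at most $2\left\lceil\frac{n\sqrt{\log \log n}}{\sqrt{\log n}}\right\rceil \cdot \boxi(\overline{G}) = 2\left\lceil\frac{n\sqrt{\log \log n}}{\sqrt{\log n}}\right\rceil \cdot t(G)$, and since $\boxi(\overline{G}) = t(G)$ is at most this dimension, the value reported is a $2\left\lceil\frac{n\sqrt{\log \log n}}{\sqrt{\log n}}\right\rceil$-factor approximation of $t(G)$.

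To justify the reduction, and to output an explicit threshold cover rather than only the value, I would unwind the identity. The easy inclusion $\boxi(\overline{G}) \le t(G)$ is constructive: given a cover of $G$ by threshold graphs $T_1,\ldots,T_m$, each complement $\overline{T_j}$ is again a threshold graph, hence an interval graph, and $E(\overline{G}) = \bigcap_j E(\overline{T_j})$, so $\{\overline{T_1},\ldots,\overline{T_m}\}$ is a box representation of $\overline{G}$. For the reverse inclusion $t(G) \le \boxi(\overline{G})$ I would start from a box representation $\{I_1,\ldots,I_b\}$ of $\overline{G}$, pass to the complements $\overline{I_1},\ldots,\overline{I_b}$ (which are spanning subgraphs of $G$ satisfying $\bigcup_j E(\overline{I_j}) = E(G)$, since $\bigcap_j I_j = \overline{G}$), and replace each $\overline{I_j}$ by threshold subgraphs of $G$ covering its edges.

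This last step is the only real difficulty. An arbitrary co-interval spanning subgraph of a split graph need not be threshold — complementing even the structured (nice) interval supergraphs produced inside Theorem \ref{thgen} can create an induced $P_4$ — so a single $\overline{I_j}$ is not automatically $2K_2$-, $C_4$-, $P_4$-free. The hard part is therefore the constructive content of $t(G)=\boxi(\overline{G})$: showing that an optimal box representation of the split graph $\overline{G}$ can be chosen with every interval graph equal to the complement of a threshold graph, so that complementing yields a threshold cover of $G$ of exactly the same cardinality. This structural refinement, which exploits the split partition of $\overline{G}$, is what pins down the sharp factor $2\left\lceil\frac{n\sqrt{\log \log n}}{\sqrt{\log n}}\right\rceil$; if one only knows how to cover each $\overline{I_j}$ by a bounded number of threshold subgraphs of $G$, the same argument still yields an $O(\kappa(n))$ factor, matching the weaker bound recorded in the abstract.
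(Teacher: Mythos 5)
Your reduction rests on the identity $t(G)=\boxi(\overline{G})$ for split graphs, and that identity is false. Take $G=P_4$ (a split graph: the two middle vertices form the clique, the two endpoints the independent set). Since threshold graphs are exactly the $(2K_2,C_4,P_4)$-free graphs, $P_4$ is not threshold and $t(P_4)=2$; but $\overline{P_4}\cong P_4$ is an interval graph, so $\boxi(\overline{P_4})=1$. Only the inclusion you call ``easy,'' namely $\boxi(\overline{G})\le t(G)$, is true, and it points in the wrong direction for an approximation guarantee: an (approximate) box representation of $\overline{G}$ certifies an upper bound on $\boxi(\overline{G})$, which can be strictly smaller than $t(G)$, and it yields no threshold cover of $G$. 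The difficulty you flag at the end --- that complementing the interval graphs in a box representation of $\overline{G}$ can produce induced $P_4$'s, so the complements need not be threshold --- is therefore not a technical refinement to be supplied but the precise reason the claimed identity fails. Your fallback (``cover each $\overline{I_j}$ by a bounded number of threshold subgraphs'') is also unsupported: the $\overline{I_j}$ are $2K_2$-free spanning subgraphs of $G$, and the threshold dimension of such graphs is not bounded by a constant (indeed the hardness of threshold dimension lives exactly on split graphs).

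The paper takes a different and correct route: by a known construction (see \cite{Mahadev1995}), from a split graph $G$ one builds in polynomial time a bipartite graph $H$ on the same vertex set with $t(G)=\ch(H)$, where $\ch$ is the minimum chain cover number; one then invokes Corollary~\ref{corChaincover}, which approximates $\ch(H)$ via the valid identity $\ch(H)=\boxi(\overline{H})$ and Theorem~\ref{thgen}. The intermediate bipartite graph $H$ is not $\overline{G}$, and replacing it by $\overline{G}$ is where your argument breaks.
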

\begin{proof}
 Given any split graph $G$, there is a polynomial time method to construct a bipartite graph $H$ on the same vertex set
such that $t(G)=\ch(H)$ \cite{Mahadev1995}. From the approximation algorithm for computing $\ch(H)$, the result follows.
\qed
\end{proof}
\paragraph{Threshold dimension of co-bipartite graphs}
Cozzens et al.~\cite{Cozzens1991} showed that if $G$ is a co-bipartite graph, an associated split graph $G'$ on the same vertex set can be constructed in polynomial time, such that for any $k \ge 2$, $t(G) \le k$ if and only if $t(G')\le k$. This reduction shows that the hardness result of threshold dimension of split graphs is also applicable for the threshold dimension of co-bipartite graphs. Moreover, we get the following.
\begin{corollary}
There is a polynomial time $2\left\lceil\frac{n\sqrt{\log \log n}}{\sqrt{\log n}}\right\rceil$ factor approximation algorithm to compute the threshold dimension of any co-bipartite graph on $n$ vertices.
\end{corollary}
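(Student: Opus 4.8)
The plan is to compose the reduction of Cozzens et al.\ recalled just above with the approximation algorithm for split graphs from the preceding corollary. Given a co-bipartite graph $G$ on $n$ vertices, I would first construct, in polynomial time, the associated split graph $G'$ on the \emph{same} $n$ vertices, which satisfies $t(G) \le k \Leftrightarrow t(G') \le k$ for every $k \ge 2$. Since $G'$ is a split graph on $n$ vertices, running the algorithm of the previous corollary on $G'$ returns in polynomial time a value $\tilde{t}$ with $t(G') \le \tilde{t} \le \kappa(n)\, t(G')$, where $\kappa(n) = 2\left\lceil\frac{n\sqrt{\log \log n}}{\sqrt{\log n}}\right\rceil$. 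The crux is to transfer this guarantee from $t(G')$ back to $t(G)$.

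The key observation is that the stated equivalence forces $t(G)=t(G')$ \emph{whenever both quantities are at least $2$}: if $t(G)\ge 2$, applying the equivalence at $k=t(G)$ yields $t(G')\le t(G)$, and if in addition $t(G')\ge 2$, applying it at $k=t(G')$ yields $t(G)\le t(G')$, so the two values coincide. Hence, on the nontrivial range, the split-graph approximation factor carries over verbatim. It is precisely the fact that $G'$ has the \emph{same} number of vertices $n$ as $G$ that keeps the factor at $\kappa(n)$ rather than degrading it; this is the first thing I would verify against the precise form of the Cozzens construction.

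The step I expect to require the most care is the corner-case handling forced by the $k\ge 2$ restriction of the equivalence, i.e.\ the small-dimension cases $t(G)\le 1$. These I would dispose of directly: threshold graphs are recognizable in linear time, so I can first test whether $G$ is edgeless ($t(G)=0$) or itself a threshold graph ($t(G)=1$) and output the exact value in those cases. In the complementary branch $t(G)\ge 2$, I would output $\max(\tilde{t},2)$. A short case analysis then confirms correctness: when $t(G')\ge 2$ we have $t(G')=t(G)$ and $\max(\tilde{t},2)=\tilde{t}\in[t(G),\kappa(n)t(G)]$; and in the only boundary situation allowed by the equivalence, namely $t(G)=2$ with $t(G')\le 1$, the output lies in $[2,\kappa(n)]$ (using $\kappa(n)\ge 2$), hence within a factor $\kappa(n)/2\le \kappa(n)$ of the optimum $t(G)=2$. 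In all cases the output is a valid $\kappa(n)$-factor estimate of $t(G)$, and the whole procedure runs in polynomial time, establishing the corollary.
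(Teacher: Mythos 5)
Your proposal is correct and follows the same route the paper takes (the paper gives no explicit proof here, relying on the Cozzens et al.\ reduction to a split graph on the same vertex set composed with the split-graph corollary). Your additional care with the $k\ge 2$ restriction of the equivalence --- recognizing threshold graphs to dispose of $t(G)\le 1$ and handling the boundary case $t(G)=2$, $t(G')\le 1$ --- is a legitimate refinement of a detail the paper glosses over, not a different approach.
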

\paragraph{Partial order dimension}
This concept was introduced by Dushnik and Miller in 1941 \cite{Dushnik41}.
A partially ordered set (poset) $\mathcal{P} = (X, P)$ consists of a nonempty set $X$ and a binary relation $P$ on $X$ that is reflexive,
antisymmetric and transitive. If every pair of distinct elements of $X$ are comparable under the relation $P$, then $(X, P)$ is called a 
total order or a linear order. A linear extension of a partial order $(X, P)$ is a linear order $(X, P')$ such that
$\forall x, y \in X$, $(x, y) \in P \Rightarrow (x, y) \in P'$. 
The dimension of a poset $\mathcal{P} = (X, P)$, denoted by $\dime(\mathcal{P})$ is defined as the smallest integer $k$ such that $\mathcal{P}$ can be expressed as the intersection of $k$ linear extensions $(X, P_1), (X, P_2), \ldots, (X, P_k)$ of $\mathcal{P}$: i.e., if $\forall x, y \in X$, $(x, y) \in P \Leftrightarrow (x, y) \in P_i$, for each $1 \le i \le k$.
 
A height-two poset is a poset $(X, P)$ in which all elements of $X$ are either minimal elements or maximal elements under the relation $P$. 
Even in the case of height-two posets, partial order dimension is hard to approximate within an $O(n^{1-\epsilon})$ factor for any $\epsilon >0$, unless
$\text{NP}=\text{ZPP}$ \cite{Chalermsook2013}. 
A height-two poset $\mathcal{P}=(X, P)$ in which $X_1$ is the set of minimal elements and $X_2$ is the set of maximal elements 
can be associated with a bipartite graph $B(\mathcal{P})$ with vertex set $X$ and edge set given by 
$\{(x, y): x \in X_1$, $y \in X_2$, $(x, y)\notin P \}$ \cite{Yan1}. 
\begin{corollary}\label{corPoset}
There is a  polynomial time $O\left(\frac{n\sqrt{\log \log n}}{\sqrt{\log n}}\right)$ factor approximation algorithm to compute the partial order dimension of a poset $\mathcal{P} = (X, P)$ defined on an $n$-element set $X$.
\end{corollary}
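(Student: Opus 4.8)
The plan is to reduce the computation of $\dime(\mathcal{P})$ for a general poset to a chain cover computation on a bipartite graph of size $O(n)$ and then to invoke Corollary~\ref{corChaincover}. I would pass through three classical correspondences: (i) a general poset can be \emph{split} into a height-two poset without changing its dimension (once the dimension is at least $2$); (ii) the dimension of a height-two poset is captured by the chain cover number of its incomparability bipartite graph $B(\cdot)$; and (iii) chain cover already admits a $\kappa(m)=2\lceil m\sqrt{\log\log m}/\sqrt{\log m}\rceil$ factor approximation on $m$-vertex bipartite graphs. Composing these yields the claimed $O\!\left(\frac{n\sqrt{\log\log n}}{\sqrt{\log n}}\right)$ factor.

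First I would dispose of the trivial case: $\dime(\mathcal{P})=1$ exactly when $P$ is a linear order, which is testable in polynomial time, so assume $\dime(\mathcal{P})\ge 2$. Next, form the split $\mathcal{P}^s$ on the ground set $X_-\cup X_+$ consisting of two copies of $X$, with $x_-<y_+$ iff $x\le y$ in $P$; this is a height-two poset on $2n$ elements, constructible in polynomial time, and a standard fact of dimension theory gives $\dime(\mathcal{P}^s)=\dime(\mathcal{P})$ when $\dime(\mathcal{P})\ge 2$. I would then use the height-two correspondence of Yannakakis~\cite{Yan1}: any linear extension of $\mathcal{P}^s$ reverses a set of cross-level incomparable pairs that is exactly a chain ($2K_2$-free) subgraph of $B(\mathcal{P}^s)$, while the within-level incomparabilities can all be reversed for free once at least two extensions are available; hence $\dime(\mathcal{P}^s)=\max\big(2,\,\ch(B(\mathcal{P}^s))\big)$. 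Finally, $B(\mathcal{P}^s)$ is bipartite on $2n$ vertices, so Corollary~\ref{corChaincover} computes, in polynomial time, a chain cover of size at most $\kappa(2n)\,\ch(B(\mathcal{P}^s))$; reading each chain graph as a linear extension, taking the maximum of the count with $2$, and un-splitting produces a realizer of $\mathcal{P}$ of that cardinality.

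The approximation bookkeeping is then routine: the value $v$ returned satisfies $\dime(\mathcal{P})\le v\le \kappa(2n)\,\dime(\mathcal{P})$, and since doubling the vertex count perturbs the $\log$ and $\log\log$ terms only by lower-order amounts we have $\kappa(2n)=O(\kappa(n))=O\!\left(\frac{n\sqrt{\log\log n}}{\sqrt{\log n}}\right)$, giving the stated factor. I expect the main obstacle to be the first step: one must verify carefully that the split never inflates the dimension, and that the $\max(\cdot,2)$ floor together with the chain-versus-non-chain boundary is handled so that the multiplicative guarantee holds for \emph{every} input, not merely asymptotically for large-dimension instances. A route that sidesteps the split is to work directly with the underlying comparability graph $G_{\mathcal{P}}$ on $n$ vertices, using the known two-sided bound relating $\dime(\mathcal{P})$ and $\boxi(G_{\mathcal{P}})$ up to constant multiplicative and additive terms, and then applying Theorem~\ref{thgen} to $G_{\mathcal{P}}$; this trades the size-$2n$ bipartite construction for a constant-factor loss in the dimension--boxicity comparison.
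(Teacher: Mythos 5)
Your proposal follows essentially the same route as the paper: split $\mathcal{P}$ into a height-two poset on $2n$ elements (this is exactly Kimble's construction, cited via \cite{Trotter78}), pass to the chain cover number of the associated bipartite graph via Yannakakis's correspondence \cite{Yan1}, and invoke Corollary~\ref{corChaincover}. The one caution is your claim that the split preserves dimension exactly when $\dime(\mathcal{P})\ge 2$: the paper uses (and only needs) the safer two-sided bound $\dime(\mathcal{P})\le\dime(\mathcal{P}')\le 1+\dime(\mathcal{P})$, which still yields the stated $O\left(\frac{n\sqrt{\log\log n}}{\sqrt{\log n}}\right)$ factor because the additive $+1$ costs at most an extra factor of $2$.
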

\begin{proof}
Let $\mathcal{P} = (X, P)$  be a poset with $|X|=n$. 
By a construction given by R. Kimble \cite{Trotter78}, given a poset $\mathcal{P} = (X, P)$ of arbitrary height, we can construct a height-two poset 
$\mathcal{P'} = (Y, P')$ from $\mathcal{P} = (X, P)$ in polynomial time so that $\dime(\mathcal{P}) \le \dime(\mathcal{P}') \le 1+ \dime(\mathcal{P})$ 
and $|Y|=2|X|$. It is also known that $\dime(\mathcal{P})=\ch(B(\mathcal{P}'))$ \cite{Yan1}.
Therefore, by computing $\ch(B(\mathcal{P}'))$ using the algorithm given by Corollary \ref{corChaincover}, we can compute a
$O\left(\frac{n\sqrt{\log \log n}}{\sqrt{\log n}}\right)$ approximation of $\dime(\mathcal{P})$.
\qed
\end{proof}
% \begin{remark}
% The approximation algorithm obtained in Corollary \ref{corPoset} affirmatively answers the question of
% obtaining an $o(n)$ factor approximation for poset dimension, an open problem mentioned in Felsner et al.~\cite{Felsner2015}.
% \end{remark}
\paragraph{Interval dimension of posets}
A poset $(X, P)$ is an interval order, if each $x \in X$ can be assigned an open interval $(l_x, r_x)$ of the real line such that 
$(x, y) \in P$ if and only if $r_x \le l_y$. 
An interval order extension of a partial order $(X, P)$ is an interval order $(X, P')$ such that
$\forall x, y \in X$, $(x, y) \in P \Rightarrow (x, y) \in P'$.
The interval dimension of a poset $\mathcal{P} = (X, P)$, denoted by $\idime(\mathcal{P})$, is defined as the 
smallest integer $k$ such that $\mathcal{P}$ can be expressed as the intersection of $k$ interval order extensions 
of $\mathcal{P}$.
Since linear orders are interval orders, it follows that $\idime(\mathcal{P}) \le \dime(\mathcal{P})$. On the other hand, the poset dimension of an
interval order can be large. 

Since the height-two poset $\mathcal{P'}$ given by Kimble's construction \cite{Trotter78,Yan1} from an arbitrary finite poset $\mathcal{P}$ 
satisfies $\dime(\mathcal{P})'=\ch(B(\mathcal{P}'))$ 
and $\ch(B(\mathcal{P}'))=\idime(\mathcal{P'})$ \cite{Yan1}, from the approximation hardness of poset-dimension \cite{Chalermsook2013}, we can see that interval dimension is hard to approximate within an $O(n^{1-\epsilon})$ factor for any $\epsilon >0$, unless $\text{NP}=\text{ZPP}$. 
Felsner et al.~\cite{FelsnerDimension} showed that given a poset $(X, P)$, it is possible to construct
another poset $(Y, P')$ in polynomial time, such that $|Y|=2|X|$ and $\idime(\mathcal{P}) = \dime(\mathcal{P}')$. 
\begin{corollary}
There is a polynomial time $O\left(\frac{n\sqrt{\log \log n}}{\sqrt{\log n}}\right)$ factor approximation algorithm for 
computing the interval dimension of any poset $\mathcal{P} = (X, P)$ 
defined on a set $X$ of $n$ elements.
\end{corollary}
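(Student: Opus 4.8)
The plan is to reduce interval dimension computation to poset dimension computation, exactly mirroring the reduction sketched in the preceding paragraph, and then invoke the approximation algorithm of Corollary \ref{corPoset}. Given a poset $\mathcal{P} = (X, P)$ with $|X| = n$, the first step would be to apply the construction of Felsner et al.~\cite{FelsnerDimension} to obtain, in polynomial time, a poset $\mathcal{P}' = (Y, P')$ satisfying $|Y| = 2n$ and $\idime(\mathcal{P}) = \dime(\mathcal{P}')$. This is the single structural ingredient that makes interval dimension accessible through the poset dimension machinery we have already built.

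Next I would run the poset dimension approximation algorithm from Corollary \ref{corPoset} on $\mathcal{P}'$. Since $\mathcal{P}'$ has $|Y| = 2n$ elements, that algorithm runs in polynomial time and returns a value within a factor $O\left(\frac{(2n)\sqrt{\log\log(2n)}}{\sqrt{\log(2n)}}\right)$ of $\dime(\mathcal{P}')$. Because $\idime(\mathcal{P}) = \dime(\mathcal{P}')$, this same quantity is simultaneously an approximation of $\idime(\mathcal{P})$ within the identical factor, so no further analysis of the underlying representations is needed to obtain the numerical bound.

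Finally, the approximation factor has to be rewritten in terms of $n$, and this is the only computation the proof requires. One checks that $\frac{(2n)\sqrt{\log\log(2n)}}{\sqrt{\log(2n)}} = O\left(\frac{n\sqrt{\log\log n}}{\sqrt{\log n}}\right)$, which holds because passing from $2n$ to $n$ alters each of $\log\log$ and $\sqrt{\log}$ by only a constant multiplicative factor for large $n$. Hence the overall approximation factor is $O\left(\frac{n\sqrt{\log\log n}}{\sqrt{\log n}}\right)$, as claimed.

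There is essentially no genuine obstacle here: both ingredients, namely the Felsner et al.~reduction and Corollary \ref{corPoset}, are already in hand, and the argument is a direct composition of the two. The only point requiring minor care is the asymptotic bookkeeping for the factor under the substitution from a $2n$-element ground set, and, if an explicit certificate is wanted rather than just the value, verifying that the realizer produced for $\mathcal{P}'$ translates back to interval order extensions of $\mathcal{P}$ under the correspondence guaranteed by \cite{FelsnerDimension}.
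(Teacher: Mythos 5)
Your proposal is correct and matches the paper's intended argument: the paper states this corollary without an explicit proof, but the immediately preceding paragraph sets up exactly the reduction you use (the Felsner et al.\ construction with $|Y|=2|X|$ and $\idime(\mathcal{P})=\dime(\mathcal{P}')$, followed by Corollary \ref{corPoset}). Your additional care about the $2n$ ground set and translating the certificate back is sound but routine, and does not change the approach.
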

\paragraph{Ferrers dimension of digraphs} 
Ferrers relations were introduced by Riguet in 1950's \cite{Riguet}. 
A digraph $G(V, E)$ is called a Ferrers digraph when there exists a linear
order $(V, L)$ such that, for every $x, y, z \in V$, if $(x, y) \in L$ and $(y, z) \in E$ then $(x, z) \in E$. 
The Ferrers dimension \cite{CogisDimension} of a digraph $G$ is the smallest number of Ferrers digraphs whose intersection is $G$. 
Since a partial order $\mathcal{P}$ has $\dime(\mathcal{P})$ equal to the Ferrers dimension of its underlying digraph \cite{CogisDimension}, 
Ferrers dimension is also hard to approximate within an $O(n^{1-\epsilon})$ factor for any $\epsilon >0$, unless $\text{NP}=\text{ZPP}$. 
Cogis \cite{CogisDimension} showed that given a digraph $G(V, E)$, a poset $\mathcal{P} = (X, P)$ can be constructed in polynomial time, 
such that $|X| \le 2|V|$ and the poset dimension of $\mathcal{P}$ is equal to the Ferrers dimension of $G$. 
\begin{corollary}
There is a polynomial time $O\left(\frac{n\sqrt{\log \log n}}{\sqrt{\log n}}\right)$ factor approximation algorithm 
for computing the Ferrers dimension of a digraph on $n$ vertices.
\end{corollary}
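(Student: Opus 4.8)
The plan is to compose Cogis's reduction from digraphs to posets with the poset-dimension approximation algorithm already established in Corollary \ref{corPoset}, and then verify that the factor-of-two blow-up in the ground set is harmless. No new combinatorial work is needed; the argument is a chaining of two cited/proved results together with a routine asymptotic check.

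First I would apply the construction of Cogis \cite{CogisDimension} to the input digraph $G(V, E)$ on $n = |V|$ vertices, obtaining in polynomial time a poset $\mathcal{P} = (X, P)$ with $|X| \le 2n$ whose partial order dimension $\dime(\mathcal{P})$ equals the Ferrers dimension of $G$. This is precisely the reduction recalled just before the statement, so this step requires nothing beyond invoking it.

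Next I would run the algorithm of Corollary \ref{corPoset} on $\mathcal{P}$. Setting $m := |X| \le 2n$, that corollary produces in time polynomial in $m$ (hence in $n$) a value approximating $\dime(\mathcal{P})$ within a factor $O\!\left(\frac{m\sqrt{\log\log m}}{\sqrt{\log m}}\right)$. Since $\dime(\mathcal{P})$ equals the Ferrers dimension of $G$, the same output simultaneously approximates the Ferrers dimension of $G$ within that factor, and the whole procedure runs in polynomial time.

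Finally I would confirm that substituting $m \le 2n$ leaves the asymptotic form of the guarantee unchanged. Writing $f(m) = \frac{m\sqrt{\log\log m}}{\sqrt{\log m}}$, one has $f(2n) = \Theta(f(n))$: the leading factor at most doubles, while $\log 2n = \Theta(\log n)$ and $\log\log 2n = \Theta(\log\log n)$, so the logarithmic corrections differ only by constant factors. Hence the approximation factor is $O\!\left(\frac{n\sqrt{\log\log n}}{\sqrt{\log n}}\right)$, as claimed. There is no real obstacle in this proof; the only point needing (trivial) care is checking that this doubling of the ground set is absorbed into the big-$O$, which the displayed asymptotics handle.
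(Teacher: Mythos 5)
Your proposal is correct and matches the paper's (implicit) argument exactly: the paper states Cogis's reduction in the preceding paragraph and leaves the corollary as an immediate consequence of composing it with Corollary \ref{corPoset}, which is precisely what you do. Your explicit check that the doubling $|X| \le 2n$ is absorbed into the $O(\cdot)$ bound is the only detail the paper omits, and you handle it correctly.
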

\section{Conclusion}
We have presented $o(n)$ factor
approximation algorithms for computing the boxicity and cubicity of graphs. 
Using these algorithms, we also derived $o(n)$ factor approximation algorithms for 
some related well-known problems, including poset dimension and 
Ferrers dimension. 
To the best of our knowledge, for none of these problems 
polynomial time sublinear factor approximation algorithms were known previously.
Since polynomial time approximations within an $O(n^{1-\epsilon})$ factor for any $\epsilon >0$ is considered unlikely for any of these problems, 
no significant improvement in the approximation factor can be expected. 
%\bibliographystyle{elsarticle-num.bst} 
%\bibliography{mybib}

\end{document}